\newtheorem{theorem}{Theorem}
\title{\LARGE \bf
Efficient Multi-Agent Trajectory Planning with Feasibility Guarantee using Relative Bernstein Polynomial
}
\author{Jungwon Park, Junha Kim, Inkyu Jang and H. Jin Kim$^{1}$% <-this % stops a space
\thanks{This work was supported by Institute of Information \& Communications Technology Planning \& Evaluation(IITP) grant funded by the Korea government(MSIT) (No. 2019-0-00399, Development of A.I. based recognition, judgement and control solution for autonomous vehicle corresponding to atypical driving environment)}
\thanks{$^{1}$The authors are with the Department of Mechanical and Aerospace Engineering, Seoul National University, Seoul, South Korea.
        {\tt\small \{qwerty35, wnsgk02, leplusbon, hjinkim\}@snu.ac.kr}}%
}
\begin{document}

\maketitle
\thispagestyle{empty}
\pagestyle{empty}

%%%%%%%%%%%%%%%%%%%%%%%%%%%%%%%%%%%%%%%%%%%%%%%%%%%%%%%%%%%%%%%%%%%%%%%%%%%%%%%%
\begin{abstract}
This paper presents a new efficient algorithm which guarantees a solution for a class of multi-agent trajectory planning problems in obstacle-dense environments.
Our algorithm combines the advantages of both grid-based and optimization-based approaches, and generates safe, dynamically feasible trajectories without suffering from an erroneous optimization setup such as imposing infeasible collision constraints.
We adopt a sequential optimization method with \textit{dummy agents} to improve the scalability of the algorithm, and utilize the convex hull property of Bernstein and relative Bernstein polynomial to replace non-convex collision avoidance constraints to convex ones.
The proposed method can compute the trajectory for 64 agents on average 6.36 seconds with Intel Core i7-7700 @ 3.60GHz CPU and 16G RAM, 
and it reduces more than $50\%$ of the objective cost compared to our previous work.
We validate the proposed algorithm through simulation and flight tests.
\end{abstract}

%%%%%%%%%%%%%%%%%%%%%%%%%%%%%%%%%%%%%%%%%%%%%%%%%%%%%%%%%%%%%%%%%%%%%%%%%%%%%%%%
\section{INTRODUCTION} 
Multi-agent systems with many unmanned aerial vehicles (UAVs) broaden the range of achievable missions to complex environments unsafe or hard to reach for humans or a single agent. 
For successful operation of these multi-agent systems, path planning algorithm is required to generate a collision-free trajectory in any obstacle environment. However, many works have a risk to fail in dense cluttered environments due to deadlock \cite{bareiss2013reciprocal, zhou2017fast} or failure caused by enforcing infeasible collision constraints in the formulation \cite{chen2015decoupled, luis2019trajectory}.

In this paper, we present an efficient multi-agent trajectory planning algorithm which generates safe, dynamically feasible trajectories in obstacle-dense environments by extending our previous work \cite{1909.02896}.
The proposed algorithm is designed to have the advantages of both grid-based and optimization-based approaches. First, it guarantees the feasibility of optimization problem formulation by utilizing an initial trajectory computed from grid-based multi-agent path finding algorithm. Second, it generates a continuous dynamically feasible trajectory by optimizing the initial trajectory with consideration of quadrotor dynamics as shown in Fig. \ref{fig: flight test}.
When we formulate the optimization problem, we utilize the convex hull property of relative Bernstein polynomial to translate non-convex collision avoidance constraints to convex ones.
Compared to the previous work \cite{1909.02896}, we modify the method for constructing constraints not to occur infeasible constraints between collision avoidance constraints, and we introduce a sequential optimization method. This sequential method can deal with a large scale of agents with improved computational efficiency, and does not cause deadlock by employing \textit{dummy agents}.

Our main contributions can be summarized as follows.
\begin{itemize}
\item A multi-agent trajectory planning algorithm is presented for obstacle-dense environments, which generates collision-free and dynamically feasible trajectories without a potential optimization failure by using relative Bernstein polynomial.
\item A sequential trajectory optimization method is proposed with dummy agents, which reduces computational load.
\item The source code will be released in \url{https://github.com/qwerty35/swarm_simulator}.
\end{itemize}

\begin{figure}[t]
\centering
\includegraphics[width = 0.60\linewidth]{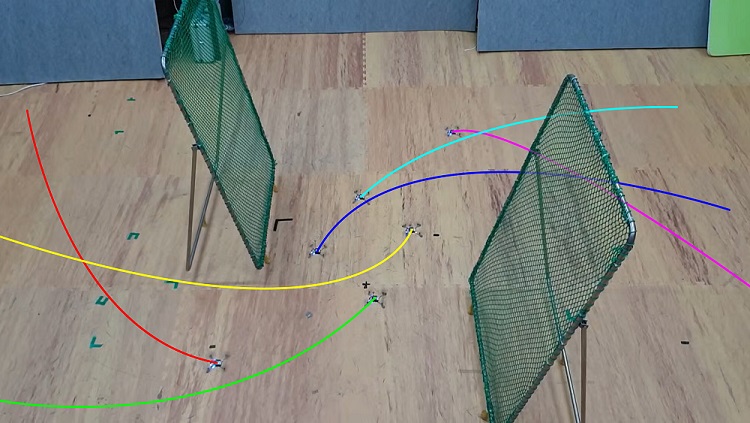}
\caption{
Flight in an obstacle environment with 6 quadrotors.
}
\label{fig: flight test}
\vspace{-5mm}
\end{figure}

% \section{RELATED WORK}
There have been discussions in literature  closely related to our work on multi-agent trajectory planning. 
In \cite{mellinger2012mixed, augugliaro2012generation, chen2015decoupled}, the trajectory generation problems are reformulated as mixed-integer quadratic programming (MIQP) or sequential convex programming (SCP) problems, that apply collision constraints at each discrete time step.
These methods suit well systems with a small number of agents, but they are intractable for large teams and complex environments because an additional adaptation process is required to find proper discretization time step depending on the size of agents and obstacles. On the other hand our method does not require this process because we do not use time discretization.

Sequential planning proposed in \cite{robinson2018efficient} for better scalability  is similar to our work. However, it may not be able to find a feasible solution for a crowded situation. To solve this, we adopt dummy agents which move along the initial trajectory computed by a grid-based planner to prevent deadlock. %(?? 실제 초기 궤적인지 optimization 초기 궤적인지 불분명)
The most relevant work can be found in \cite{honig2018trajectory, debord2018trajectory}. They plan an initial trajectory with a grid-based planner and then construct a safe flight corridor (SFC), which indicates a safe region of each agent. However, they need to resize SFC iteratively until the overall cost converges, while our proposed method does not need an additional resizing process by using relative Bernstein polynomial.

Recently, distributed planning is receiving much attention due to scalability \cite{bareiss2013reciprocal, zhou2017fast, luis2019trajectory}.
However, such distributed methods are not able to guarantee a safe solution in obstacle-dense environments due to deadlock.

% \textcolor{green}{
% This paper is structured as follows.
% The problem formulation is presented in section \ref{sec: problem formulation}. 
% In section \ref{sec: relative bernstein polynomial}, we introduce relative Bernstein polynomial to deal with inter-collision avoidance. 
% We define key terms used in this paper in section \ref{sec: definition}, and we describe the method of multi-agent trajectory planning in section \ref{sec: method}.
% Experimental results are presented in section \ref{sec: experiments}.
% Finally, section \ref{sec: conclusions} contains conclusions.
% }

\section{PROBLEM FORMULATION}
\label{sec: problem formulation}
In this section, we formulate an optimization problem to generate safe, continuous trajectories for a multi-agent robot system consisting of $N$ quadrotors. We assign the mission for the $i^{th}$ quadrotor to move from the start point $s^i$ to the goal point $g^i$. The quadrotors may have a different size with radius $r^{1},...,r^{N}$m. The maximum velocity and acceleration of the $i^{th}$ quadrotor are $v^{i}_{max}$, $a^{i}_{max}$ respectively.

\subsection{Assumption}
\label{subsec: assumption}
We assume that prior knowledge of the free space $\mathcal{F}$ of the environment is given as a 3D occupancy map. 
We also assume that the grid-based initial trajectory planner in section \ref{subsec: initial trajectory planning} can find a solution when the grid size is $d$.

\subsection{Trajectory Representation}
Due to the differential flatness of quadrotor dynamics, it is known that the trajectory of quadrotor can be represented in a polynomial function with flat outputs in time $t$ \cite{mellinger2011minimum}. However, it is difficult to handle collision avoidance constraints with standard polynomial basis. For this reason, we formulate the trajectory of quadrotors using a piecewise Bernstein polynomial. 
The Bernstein polynomial is the linear combination of Bernstein basis polynomials, and the Bernstein basis polynomial of degree $n$ is defined as follows:
\begin{equation}
    B_{k,n}(t) = {n \choose k}t^{k}(1-t)^{n-k}
\label{eq: bernstein basis}
\end{equation}
for $t\in[0, 1]$ and $k=0,1,...,n$.
The trajectory of the $i^{th}$ quadrotor, $p^{i}(t) \in \mathbb{R}^{3}$, can be represented as $M$-segment piecewise Bernstein polynomials:
\begin{equation}
\begin{alignedat}{2}
    p^{i}(t) = 
    \begin{cases} 
    \ \sum_{k=0}^{n}c^{i}_{1,k}B_{k,n}(\tau_1)   & t \in [T_{0}, T_{1}] \\
    \ \sum_{k=0}^{n}c^{i}_{2,k}B_{k,n}(\tau_2)   & t \in [T_{1}, T_{2}] \\
    \ \vdotswithin{=}                                                 & \vdotswithin{\in} \\    
    \ \sum_{k=0}^{n}c^{i}_{M,k}B_{k,n}(\tau_M) & t \in [T_{M-1}, T_{M}] \\
    \end{cases}
\end{alignedat}
\label{eq: trajectory representation}
\end{equation}
where $\tau_m = \frac{t-T_{m-1}}{T_{m}-T_{m-1}}$, $c^{i}_{m,k}$ is the $k^{th}$ control point of the $m^{th}$ segment of the $i^{th}$ quadrotor's trajectory, and $T_{m-1}, T_{m}$ are the start and end time of the $m^{th}$ segment, respectively. Thus, the decision vector of our optimization problem, $c$, consists of all the control points of $p^{i}(t)$ for $i=1,...,N$.

\subsection{Objective Function}
We define the objective function to minimize the integral of the square of the $\phi^{th}$ derivative:
\begin{equation}
    J = \sum\limits_{i=1}^{N}\int_{T_0}^{T_M}\left\|\frac{d^{\phi}}{dt^{\phi}}p^{i}(t)\right\|^{2}_{2}dt = c^{T}Qc  
\label{eq: objective function}
\end{equation}
where $Q$ is the Hessian matrix of the objective function. In this paper, we set $\phi= 3$ to minimize the jerk of the trajectory, so that the input to the quadrotor becomes less aggressive \cite{mueller2015computationally}.

\subsection{Convex Constraints}

The trajectory must pass the start and goal points and should be continuous up to the $\phi-1^{th}$ derivatives. Also, it must not exceed maximum velocity and acceleration. These constraints can be written in affine equality and inequality constraints respectively:
\begin{equation}
    A_{eq}c = b_{eq}
\label{eq: equality constraints}
\end{equation}
\begin{equation}
    A_{dyn}c \preceq b_{dyn}
\label{eq: dynamic feasible constraints}
\end{equation}

\subsection{Non-Convex Collision Avoidance Constraints}
\begin{figure}[t]
\centering
\includegraphics[width = 0.8\linewidth]{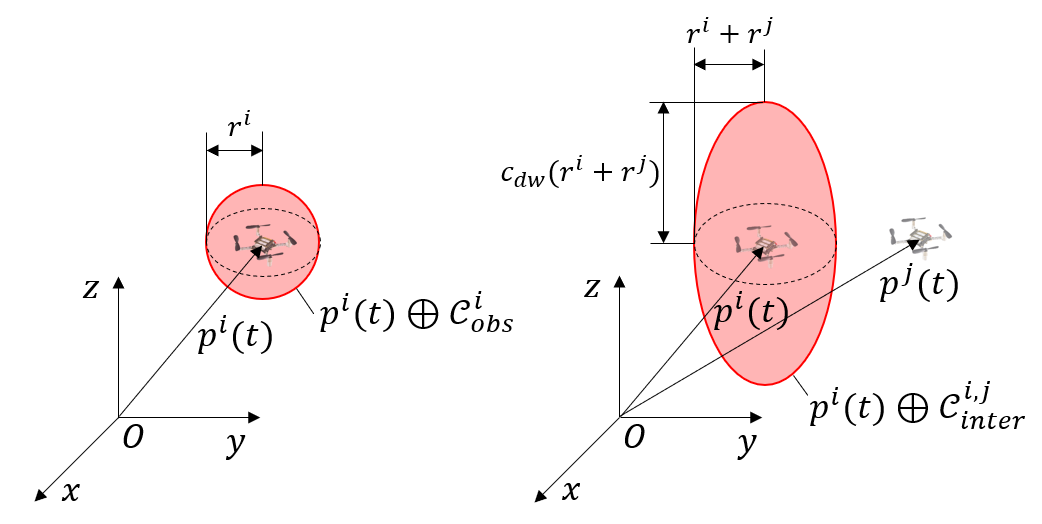}
\caption{
(Left) Obstacle collision model, (Right) Inter-collision model.
}
\label{fig: collision model}
\vspace{-5mm}
\end{figure}

\subsubsection{Obstacle Avoidance Constraints}
We define an obstacle collision model of the $i^{th}$ quadrotor, which models a collision region between a quadrotor and obstacles (See Fig. \ref{fig: collision model}):
\begin{equation}
    \mathcal{C}^i_{obs} = \{ p \in \mathbb{R}^{3} \mid \|p\|^{2}_{2} \leq (r^i)^{2} \}
\label{eq: obstacle collision model}
\end{equation}
The $i^{th}$ quadrotor must satisfy the condition below not to collide with obstacles:
\begin{equation}
    p^{i}(t) \oplus \mathcal{C}^i_{obs} \subset \mathcal{F}, \:\:\:\:\: t \in [T_0, T_M]
\label{eq: obstacle collision avoidance constraint}
\end{equation}
where $\oplus$ is the Minkowski sum.

\subsubsection{Inter-Collision Avoidance Constraints}
% Similar to the obstacle collision model, we define an inter-collision model $\mathcal{C}^{i,j}_{inter}$ an ellipsoidal form, which models a collision region between $i^{th}$ and $j^{th}$ agents
A collision region between $i^{th}$ and $j^{th}$ agents can be expressed with an inter-collision model $\mathcal{C}^{i,j}_{inter}$:
\begin{equation}
    \mathcal{C}^{i,j}_{inter} = \{ p \in \mathbb{R}^{3} \mid p^{T}Ep \leq (r^{i}+r^{j})^{2} \}
\label{eq: inter-collision model}
\end{equation}
where $E$ is $diag([1, 1, 1/(c_{dw})^2])$, and $c_{dw}$ is a coefficient to consider a downwash effect.
The $i^{th}$ agent does not collide with the $j^{th}$ agent if the relative trajectory of the $j^{th}$ agent respect to the $i^{th}$ agent, $p^{i,j}(t)=p^{j}(t)-p^{i}(t)$, satisfies the following condition:
\begin{equation}
    p^{i,j}(t) \cap \mathcal{C}^{i,j}_{inter} = \emptyset, \:\:\:\:\: t \in [T_0, T_M]
\label{eq: inter-collision avoidance constraint}
\end{equation}
Non-convexity of (\ref{eq: obstacle collision avoidance constraint}) and (\ref{eq: inter-collision avoidance constraint}) makes it difficult to directly employ them.
In the next section, we will show the method that relaxes those non-convex constraints to convex ones using relative Bernstein polynomial.

\section{Relative Bernstein Polynomial}
\label{sec: relative bernstein polynomial}
One of the useful properties of the Bernstein polynomial is a convex hull property that the Bernstein polynomial is confined within the convex hull of its control points \cite{zettler1998robustness}.
This property has been used to confine the trajectory to a convex set called safe flight corridor (SFC) for obstacle avoidance \cite{tang2016safe,gao2018online}.
Here, we introduce the method to confine the relative polynomial trajectory to inter-collision-free region by utilizing the convex hull property.

Let the $m^{th}$ segment of $p^{i}(t), p^{j}(t)$ be $p^{i}_m(t), p^{j}_m(t)$ respectively, and $p^{i,j}_{m}(t) = p^{j}_{m}(t)-p^{i}_{m}(t)$ is their relative trajectory. Then  $p^{i,j}_{m}(t)$ can be written as
\begin{equation}
\begin{split}
p^{i,j}_{m}(t) & = \sum\limits_{k=0}^{n}(c^j_{m,k}-c^i_{m,k})B_{k,n}(\tau_{m})\\
               & = \sum\limits_{k=0}^{n}c^{i,j}_{m,k}B_{k,n}(\tau_{m})
\end{split}
\label{eq: relative bernstein polynomial}
\end{equation}
where $c^{i,j}_{m,k} = c^j_{m,k}-c^i_{m,k}$ is the control point of $p^{i,j}_{m}(t)$, which implies that the relative Bernstein polynomial is also Bernstein polynomial.
Thus, by the convex hull property, we can enforce the $i^{th}$ and $j^{th}$ quadrotors not to collide with each other by limiting all control points $c^{i,j}_{m,k}$ within a convex, inter-collision free region.
We call this region a relative safe flight corridor (RSFC).
In this way, we can generate the safe trajectory by adjusting SFC, RSFC. %in our problem.

\section{Definition}
\label{sec: definition}
In the previous work \cite{1909.02896}, we determined RSFC by choosing a proper one among pre-defined RSFC candidates. RSFC candidates were designed to be able to utilize the differential flatness of quadrotor, and so as to achieve fast planning speed. However, it may fail to find a trajectory because a feasible region that satisfies both RSFC and SFC constraints may not exist.
To guarantee the existence of such feasible region as Fig. \ref{fig: collision avoidance constraints}, we precisely define three key terms in this paper: initial trajectory, SFC, and RSFC.

\begin{figure}[t]
\centering
\includegraphics[width = 0.5\linewidth]{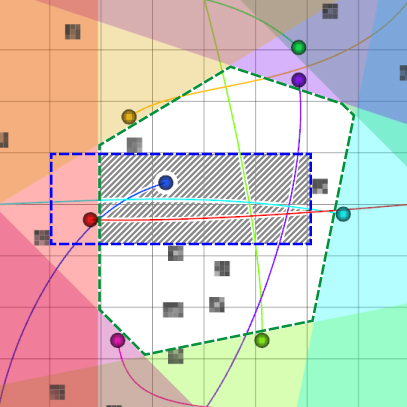}
\caption{
Collision avoidance constraints. The region surrounded by the blue dashed line is safe flight corridor (SFC) for the blue agent, and the region surrounded by green dashed line is the intersection of relative safe flight corridor (RSFC) for the blue agent. To generate a safe trajectory, there must exist intersection between SFC and RSFC (gray shaded area).
}
\label{fig: collision avoidance constraints}
\vspace{-5mm}
\end{figure}

\subsection{Initial Trajectory}
An initial trajectory of the $i^{th}$ quadrotor, $\pi^{i} = \{\pi^{i}_{0},...,\pi^{i}_{M}\}$, is defined as a path that satisfies the following conditions for all $m = 0,...,M$ and $i \neq j$:
\begin{equation}
     \pi^i_0 = s^i, \pi^i_M = g^i
\label{eq: initial trajectory condition0}
\end{equation}
\begin{equation}
     \langle \pi^i_{m-1}, \pi^i_{m} \rangle \oplus \mathcal{C}^i_{obs} \subset \mathcal{F}
\label{eq: initial trajectory condition1}
\end{equation}
\begin{equation}
     \langle \pi^{i,j}_{m-1}, \pi^{i,j}_{m} \rangle \cap \mathcal{C}^{i,j}_{inter} = \emptyset
\label{eq: initial trajectory condition2}
\end{equation}
where $\langle \pi^i_{m-1}, \pi^i_{m} \rangle = \{\alpha\pi^i_{m-1}+(1-\alpha)\pi^i_{m} \mid 0 \leq \alpha \leq 1\}$ is a line segment between waypoints $\pi^i_{m-1}$ and $\pi^i_{m}$, and $\pi^{i,j}_m = \pi^j_{m}-\pi^i_{m}$. \eqref{eq: initial trajectory condition1} shows that the initial trajectory does not collide with obstacles, and \eqref{eq: initial trajectory condition2} means that the agents do not collide with other agents when all the agents move along their initial trajectory at constant velocity.

\subsection{Safe Flight Corridor}
The $m^{th}$ safe flight corridor (SFC) of the $i^{th}$ quadrotor, $\mathcal{S}^{i}_{m}$, is defined as a convex set satisfies following conditions:
\begin{equation}
     \mathcal{S}^{i}_{m} \oplus \mathcal{C}^{i}_{obs} \subset \mathcal{F}
\label{eq: sfc condition1}
\end{equation}
\begin{equation}
     \langle \pi^i_{m-1}, \pi^i_{m} \rangle \subset \mathcal{S}^{i}_{m}
\label{eq: sfc condition2}
\end{equation}
The condition (\ref{eq: sfc condition1}) shows that an agent in SFC does not collide with obstacles, so SFC can be used for obstacle collision avoidance. 
% We note that there always exists a convex set that satisfies eq. (\ref{eq: sfc condition1}) and (\ref{eq: sfc condition2}), for given initial trajectory $\pi^i$ (e.g. $\langle \pi^i_{m-1}, \pi^i_{m} \rangle$).

\subsection{Relative Safe Flight Corridor}
The $m^{th}$ relative safe flight corridor (RSFC) between $i^{th}$ and the $j^{th}$ quadrotor, $\mathcal{R}^{i,j}_{m}$, is defined as a convex set that satisfies the following conditions:
\begin{equation}
     \mathcal{R}^{i,j}_{m} \cap \mathcal{C}^{i,j}_{inter} = \emptyset
\label{eq: rsfc condition1}
\end{equation}
\begin{equation}
     \langle \pi^{i,j}_{m-1}, \pi^{i,j}_{m} \rangle \subset \mathcal{R}^{i,j}_{m}
\label{eq: rsfc condition2}
\end{equation}
If $\mathcal{R}^{i,j}_{m}$ includes $p^{i,j}(t)$ for $t \in [T_{m-1}, T_m]$, then there is no collision between the $i^{th}$ and $j^{th}$ agents for $t \in [T_{m-1}, T_m]$ due to  (\ref{eq: inter-collision avoidance constraint}) and (\ref{eq: rsfc condition1}). For this reason, we can use RSFC to avoid collision between agents. 
% We also note that there always exists a convex set that satisfies above conditions, for given initial trajectories $\pi^i$, $\pi^j$ (e.g. $\langle \pi^{i,j}_{m-1}, \pi^{i,j}_{m} \rangle$).

\section{Method}
In this section, we introduce the efficient trajectory planning algorithm using convex safe corridors.
Alg. \ref{alg: trajectory planning algorithm} shows the process of trajectory planning. 
We first plan initial trajectories (line 1), and we use them to determine safe flight corridor (SFC) (line 3) and relative safe flight corridor (RSFC) (line 5). After that, we compose quadratic programming (QP) problem using initial trajectories and safe corridors (line 8). Finally, we scale the total flight time to satisfy dynamic feasibility constraints (line 9). The detail of each process is described in the following subsections. 

\label{sec: method}
\setlength{\textfloatsep}{5pt}
\begin{algorithm}
\SetAlgoLined
\KwIn{start point $s^i$, goal point $g^i$ for agents $i \in \{1,...,N\}$, 3D occupancy map $\mathcal{E}$}
\KwOut{total flight time $T$,
       trajectory $p^i(t)$ for agents $i \in \{1,...,N\}$, $t \in [0,T]$}
  $\pi = (\pi^{1},...,\pi^{N}) \gets$ planInitialTraj($s^{\forall i}$, $g^{\forall i}, \mathcal{E}$)\;
  \For{$i \gets 1$ \KwTo $N$}{
    $\mathcal{S}^{i} = (\mathcal{S}^{i}_{0},...,\mathcal{S}^{i}_{M}) \gets$ buildSFC($\pi^{i}, \mathcal{E}$)\;
    \For{$j \gets i+1$ \KwTo $N$}{
      $\mathcal{R}^{i,j} = (\mathcal{R}^{i,j}_{0},...,\mathcal{R}^{i,j}_{M}) \gets$ buildRSFC($\pi^{i},\pi^{j}$)\;
    }
  }
  $p^{0}(t),...,p^{N}(t) \gets$ trajOpt($\pi, \mathcal{S}^{\forall i}, \mathcal{R}^{\forall i, j > i})$\;
  $T, p^{0}(t),...,p^{N}(t) \gets$ timeScale($p^{0}(t),...,p^{N}(t)$)\;
  \KwRet{$T, p^{0}(t),...,p^{N}(t)$}
\caption{Trajectory Planning Algorithm}
\label{alg: trajectory planning algorithm}
\end{algorithm}

\subsection{Initial Trajectory Planning}
\label{subsec: initial trajectory planning}
To plan an initial trajectory, we use a graph-based multi-agent pathfinding (MAPF) algorithm. 
Among various MAPF algorithms such as \cite{wagner2011m, sharon2015conflict}, we choose enhanced conflict-based search (ECBS) for the following two reasons: 
(i) ECBS can find a suboptimal solution in a short time. Because the optimal MAPF algorithm is NP-complete \cite{yu2013structure}, it could be better to use a suboptimal MAPF solver with respect to computation time.
(ii) The ECBS algorithm is complete. 
To guarantee completeness of Alg. \ref{alg: trajectory planning algorithm}, individual submodules in the algorithm must be complete.

To utilize the graph-based ECBS in our problem, we formulate the planInitialTrajectory function in line 1, Alg \ref{alg: trajectory planning algorithm} as follows.
First, we translate the given 3D occupancy map into a 3D grid map with grid size $d$. 
Next, we set constraints which determine conflict in the ECBS algorithm to satisfy the condition (\ref{eq: initial trajectory condition2}). 
After that, we give start and goal points as the input and compute the initial trajectory. 
If start and goal points are not located on the 3D grid map, then we use the nearest grid points instead and append the start/goal points to both ends respectively. 

\subsection{Safe Flight Corridor Construction}
\label{subsec: safe flight corridor construction}
Alg. \ref{alg: sfc construction} shows the construction process of safe flight corridor (SFC). 
We initialize SFC to $\langle \pi^{i}_{m-1}, \pi^{i}_{m} \rangle$ to fulfill the condition (\ref{eq: sfc condition2}) (line 3). 
For all direction, we check whether SFC is expandable (line 5-9), and we expand SFC by a pre-specified length  (line 10).
This algorithm guarantees to return convex sets that satisfy the definition of SFC.

\begin{algorithm}
\SetAlgoLined
\KwIn{initial trajectory $\pi^i$, 3D occupancy map $\mathcal{E}$}
\KwOut{safe flight corridor $\mathcal{S}^{i} = (\mathcal{S}^{i}_{0},...,\mathcal{S}^{i}_{M})$}
  $D \gets \{\pm x, \pm y, \pm z\}$\;
  \For{$m \gets 1$ \KwTo $M$}{
      $\mathcal{S}^{i}_{m} \gets \langle \pi^{i}_{m-1}, \pi^{i}_{m} \rangle$\;
      \While{$D$ is not empty}{
        \For{$\mu$ in $D$}{
          \If{$\mathcal{S}^{i}_{m}$ cannot expand to direction $\mu$}{
            $D \gets D \backslash \mu$    
          }
        }
        expand $\mathcal{S}^{i}_{m}$ to all direction in $D$\;
      }
  }
\caption{buildSFC}
\label{alg: sfc construction}
\end{algorithm}
\vspace{-5mm}

\subsection{Relative Safe Flight Corridor Construction}
\begin{figure}
    % \vspace{1.5mm}
    \centering
    \begin{subfigure}[t]{0.21\textwidth}
        \includegraphics[width=\textwidth]{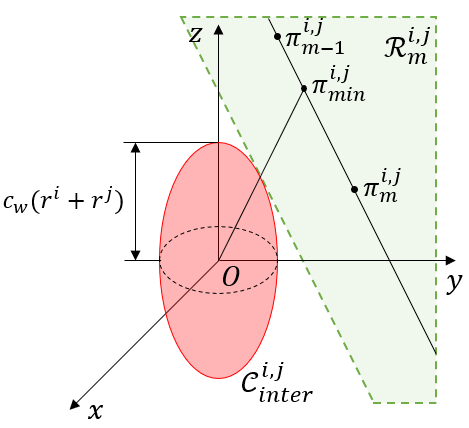}
        \caption{Before coordinate transformation.}
        \label{fig: relative safe flight corridor construction 1}
    \end{subfigure}
    ~ %add desired spacing between images, e. g. ~, \quad, \qquad, \hfill etc. 
      %(or a blank line to force the subfigure onto a new line)
    \begin{subfigure}[t]{0.21\textwidth}
        \includegraphics[width=\textwidth]{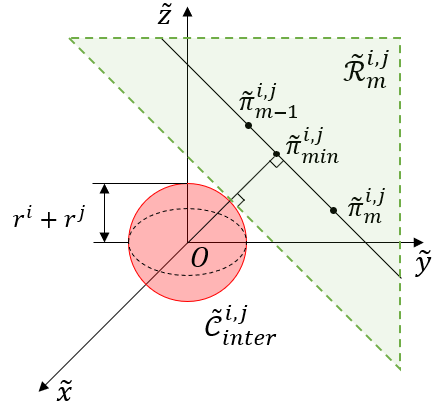}
        \caption{After coordinate transformation.}
        \label{fig: relative safe flight corridor construction 2}
    \end{subfigure}
    \caption{Construction of relative safe flight corridor. The red ellipsoid is an inter-collision model between the quadrotors $i,j$, and the green-shaded region is the relative safe flight corridor (RSFC).
    }
    \label{fig: relative safe flight corridor construction}
\end{figure}
To build RSFC, we first perform affine coordinate transformation $\widetilde{x}=E^{\frac{1}{2}}x$, where $E^{\frac{1}{2}}$ is $diag([1, 1, 1/c_{dw}])$. Then, the inter-collision model $\mathcal{C}^{i,j}_{inter}$ and initial trajectory $\pi^{i,j}$ are transformed to $\widetilde{\mathcal{C}}^{i,j}_{inter}$ and $\widetilde{\pi}^{i,j}$ as shown in Fig. \ref{fig: relative safe flight corridor construction 2}. 
Let $\widetilde{\pi}^{i,j}_{min}$ be the nearest point of $\langle \widetilde{\pi}^{i,j}_{m-1}, \widetilde{\pi}^{i,j}_{m} \rangle$ to the origin. We construct RSFC as follows:
\begin{equation}
     \mathcal{R}^{i,j}_{m} = \{x = E^{-\frac{1}{2}}\widetilde{x} \mid \widetilde{x} \cdot \widetilde{n}_{min} - (r^i+r^j) > 0 \}
\label{eq: rsfc construction}
\end{equation}
where $\widetilde{n}_{min} = \widetilde{\pi}^{i,j}_{min} / \|\widetilde{\pi}^{i,j}_{min}\|$.
As depicted in Fig. \ref{fig: relative safe flight corridor construction 1}, our RSFC is a half-space divided by the plane, which is tangent to the inter-collision model at the $\pi^{i,j}_{min}= E^{-\frac{1}{2}}\widetilde{\pi}^{i,j}_{min}$.
We note that the convex set in (\ref{eq: rsfc construction}) satisfies the definition of RSFC, but we omit the proof due to page limits.  %(in the longer archive version  ??)

\subsection{Sequential Trajectory Optimization}
\label{subsec: trajectory optimization}
Optimizing all control points of polynomials at once can cause the scalability problem because the time complexity of the QP solver is $O(n^3)$.
Here, we propose an efficient sequential optimization method using \textit{dummy agents}. 
% This method can improve scalability by dividing the big problem into several small ones, and prevent deadlock by utilizing dummy agents.

\begin{algorithm}
\SetAlgoLined
\KwIn{initial trajectory $\pi$, safe flight corridor $\mathcal{S}^{\forall i}$, relative safe flight corridor $\mathcal{R}^{\forall i,j>i}$}
\KwOut{trajectory $p^i(t)$ for agents $i \in b$, $t \in [0,T]$}
  $p_{dmy}(t) = (p^{0}_{dmy}(t),...,p^{N}_{dmy}(t)) \gets$ planDummy($\pi$)\;
  \For{$l \gets 1$ \KwTo $N_{b}$}{
    $b \gets $ agents in $l^{th}$ batch\;
    $p^{b}(t) \gets$ solveQP($\pi^{b}, \mathcal{S}^{b}, \mathcal{R}^{\forall i, j > i}, p^{\forall i \notin b}_{dmy}(t)$)\;
    $p_{dmy}(t) \gets p(t)$\;
  }
  \KwRet{$p^{0}(t),...,p^{N}(t)$}
\caption{trajOpt}
\label{alg: trajopt}
\end{algorithm}
\vspace{-2mm}

Alg. \ref{alg: trajopt} shows the process of sequential optimization. First, we generate trajectories for dummy agents $p_{dmy}(t)$ using the following control points (line 1):
\begin{equation}
    c^{i}_{m,k} = 
    \begin{cases} 
    \ \pi^{i}_{m-1}   & k = 0,...,\phi-1 \\
    \ \pi^{i}_{m}     & k = n-(\phi-1),...,n \\
    \ x \in \langle \pi^i_{m-1}, \pi^i_{m} \rangle & else
    \end{cases}
\label{eq: feasible constraints}
\end{equation}
% These dummy agents are used to prevent the previously planned trajectory from blocking the space for the other agents.
% Next, we divide the agents into $N_{b}$ batches, and we optimize the trajectory by solving the below QP problem for the batch $b$ (line 3-4, Fig. \ref{fig: sequential planning 2}):
Next, we divide the agents into $N_{b}$ batches, and we solve the below QP problem for the batch $b$ (line 3-4):
\begin{equation}
\begin{aligned}
& \text{minimize}     & & c^{T}Qc \\
& \text{subject to}   & & A_{eq}c = b_{eq} \\
&                     & & c^i_{m,k} = \text{control points of } p^{i}_{dmy}(t), & & \forall i \notin b, m, k \\
&                     & & c^i_{m,k} \in \mathcal{S}^{i}_{m},             & & \forall i \in b, m, k\\
&                     & & c^j_{m,k}-c^i_{m,k} \in \mathcal{R}^{i,j}_{m}  & & \forall i, j>i, m, k\\
\end{aligned}
\label{eq: trajectory optimization}
\end{equation}
where $c \in \mathbb{R}^{\frac{N}{N_{b}}M(n+1)}$, $A_{eq} \in \mathbb{R}^{\frac{N}{N_{b}}(M+1)\phi \times \frac{N}{N_{b}}M(n+1)}$, and the number of inequality constraints is $(N-\frac{1}{2}(\frac{N}{N_{b}}+1))\frac{N}{N_{b}}M(n+1)$. $p^{i}_{dmy}(t)$ is the trajectory for $i^{th}$ dummy agent.
% At last, we replace the trajectory of dummy agents to the previously planned one (line 5, Fig. \ref{fig: sequential planning 3}), and plan the trajectory for the next batch sequentially (Fig. \ref{fig: sequential planning 4}).
At last, we replace the trajectory of dummy agents to the previously planned one (line 5), and plan the trajectory for the next batch sequentially.

Fig. \ref{fig: sequential planning} visualizes the overall process. 
For each iteration, we deploy dummy agents except the agents in the current batch (Fig. \ref{fig: sequential planning 1}).
Then, we plan the trajectory for the current batch to avoid dummy agents (Fig. \ref{fig: sequential planning 2}).  
After that, agents in the current batch are used as dummy agents at the next iteration (Fig. \ref{fig: sequential planning 3}). At the end of the iteration, collision-free trajectories are found without deadlock because all the agents are planned to avoid the previous batch (Fig. \ref{fig: sequential planning 4}).

This sequential method can achieve better scalability because we can avoid the high time complexity of QP solver by increasing the number of the batch as the number of agents increases while keeping the same number of decision variables of QP. 
Furthermore, we can prove that \eqref{eq: trajectory optimization} consists of feasible constraints, which means that our method does not cause optimization failure due to infeasible constraints.
\begin{theorem}
If $n \geq 2\phi-1$, then there exists a decision vector $c$ that satisfies the constraints of eq. (\ref{eq: trajectory optimization}) for all iterations.
\label{theorem: feasible constraints}
\end{theorem}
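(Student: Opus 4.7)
The plan is to exhibit an explicit feasible vector $c$ at each iteration of Alg.~\ref{alg: trajopt}, namely the construction indicated in (\ref{eq: feasible constraints}), and verify that it satisfies every constraint of (\ref{eq: trajectory optimization}). The only ingredients are the endpoint/derivative interpretation of Bernstein control points together with the defining inclusions (\ref{eq: sfc condition2}) and (\ref{eq: rsfc condition2}).

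First I would fix once and for all a sequence $\lambda_{\phi},\dots,\lambda_{n-\phi}\in[0,1]$ (empty when $n=2\phi-1$) and set, for every agent $i$ and every segment $m$,
\[
c^{i}_{m,k} = \begin{cases} \pi^{i}_{m-1}, & 0\le k\le \phi-1,\\ \pi^{i}_{m}, & n-\phi+1\le k\le n,\\ \lambda_{k}\pi^{i}_{m-1}+(1-\lambda_{k})\pi^{i}_{m}, & \text{otherwise.} \end{cases}
\]
The hypothesis $n\ge 2\phi-1$ enters exactly here: it guarantees that the two prescribed index ranges are disjoint, so the assignment is consistent. Because the first $\phi$ Bernstein coefficients of a degree-$n$ polynomial determine the polynomial's value and its first $\phi-1$ derivatives at the left endpoint of the segment (and symmetrically for the last $\phi$ coefficients at the right endpoint), this choice makes each segment begin at $\pi^{i}_{m-1}$ and end at $\pi^{i}_{m}$ with all derivatives of order $1,\dots,\phi-1$ vanishing at both endpoints. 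Combined with $\pi^{i}_{0}=s^{i}$ and $\pi^{i}_{M}=g^{i}$, this yields $A_{eq}c=b_{eq}$.

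Next I would verify the corridor inclusions. Every $c^{i}_{m,k}$ is a convex combination of the endpoints of $\langle\pi^{i}_{m-1},\pi^{i}_{m}\rangle$, so (\ref{eq: sfc condition2}) gives $c^{i}_{m,k}\in\mathcal{S}^{i}_{m}$. Since the coefficients $\lambda_{k}$ are $i$-independent, the difference $c^{j}_{m,k}-c^{i}_{m,k}$ is the corresponding convex combination of $\pi^{i,j}_{m-1}$ and $\pi^{i,j}_{m}$, hence lies in $\langle\pi^{i,j}_{m-1},\pi^{i,j}_{m}\rangle\subset\mathcal{R}^{i,j}_{m}$ by (\ref{eq: rsfc condition2}).

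The remaining step is to handle the sequential structure, which I would argue by induction on the batch index $l$. At iteration $l$, the dummy control points of not-yet-planned agents are still the values from (\ref{eq: feasible constraints}), while agents from earlier batches carry control points that were themselves produced by a feasible QP under the inductive hypothesis. Letting the free variables (agents $i\in b^{l}$) take the values prescribed by (\ref{eq: feasible constraints}) keeps the SFC constraints and the in-batch RSFC constraints satisfied as above, matches the dummy fixtures for $i\notin b^{l}$ by definition, and, for the cross-batch pairs, reproduces exactly the same relative control points that were already proved feasible when the earlier agent was planned against the fixed dummy of the later one. The main obstacle is precisely this bookkeeping: one must check that replacing a dummy by its optimized trajectory never spoils an RSFC constraint it had satisfied, and this works only because the initial dummy formula (\ref{eq: feasible constraints}) is used uniformly throughout the procedure and the $\lambda_{k}$ are chosen independently of $i$.
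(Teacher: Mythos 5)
Your proposal is correct and takes essentially the same route as the paper's proof: exhibit the dummy control-point assignment \eqref{eq: feasible constraints} as an explicit feasible point of \eqref{eq: trajectory optimization} at every iteration, using \eqref{eq: initial trajectory condition0}, \eqref{eq: sfc condition2} and \eqref{eq: rsfc condition2}. You are in fact slightly more careful than the paper on two points it leaves implicit --- that the interior control points must use agent-independent interpolation coefficients $\lambda_k$ so that $c^j_{m,k}-c^i_{m,k}$ lands in $\langle \pi^{i,j}_{m-1},\pi^{i,j}_{m}\rangle$, and that the cross-batch pairs require the inductive bookkeeping over batches --- but the underlying argument is identical.
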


\begin{proof}
For any iteration $l$, let us assign the decision vector $c$ as \eqref{eq: feasible constraints} for $i \in b$ and $m = 1,...,M$.
Then $c$ satisfies the waypoint constraints due to \eqref{eq: initial trajectory condition0}. 
$p^i(t)$ is continuous up to $\phi-1^{th}$ derivatives at $t=T_{m}$ for $m = 1,...,M-1$ because $c^{i}_{m,n-(\phi-1):n} = c^{i}_{m+1,0:\phi-1} = \pi^{i}_{m}$. $c$ also fulfills safe corridor constraints due to \eqref{eq: sfc condition2} and \eqref{eq: rsfc condition2}. Note that trajectories in the iteration $1,...,l-1$ do not collide with trajectories in the current batch because they are planned to avoid dummy agents which consist of control points \eqref{eq: feasible constraints}.
Thus, $c$ is the decision vector that satisfies the constraints of \eqref{eq: trajectory optimization}.
\end{proof}

% \section{Sequential Planning for Scalability}
% \label{sec: sequential planning for scalability}
\begin{figure}
    % \vspace{1.5mm}
    \centering
    \begin{subfigure}[t]{0.19\textwidth}
        \includegraphics[width=\textwidth]{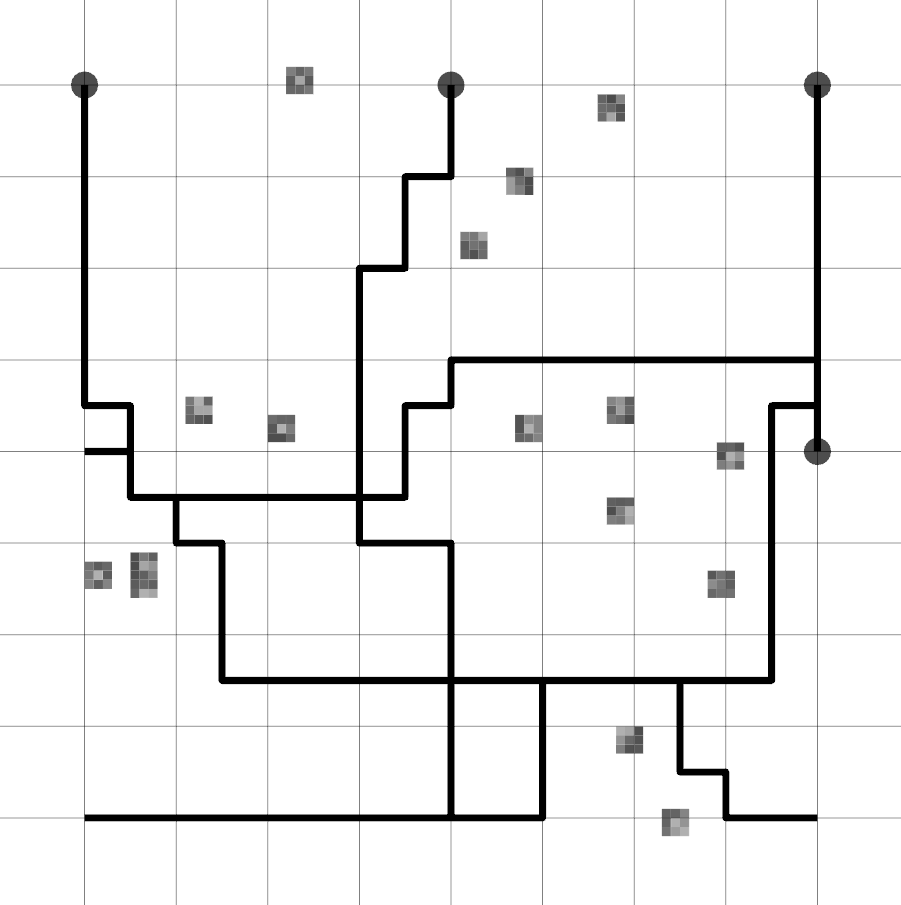}
        \caption{Deploy dummy agents.}
        \label{fig: sequential planning 1}
    \end{subfigure}
    ~ %add desired spacing between images, e. g. ~, \quad, \qquad, \hfill etc. 
      %(or a blank line to force the subfigure onto a new line)
    \begin{subfigure}[t]{0.19\textwidth}
        \includegraphics[width=\textwidth]{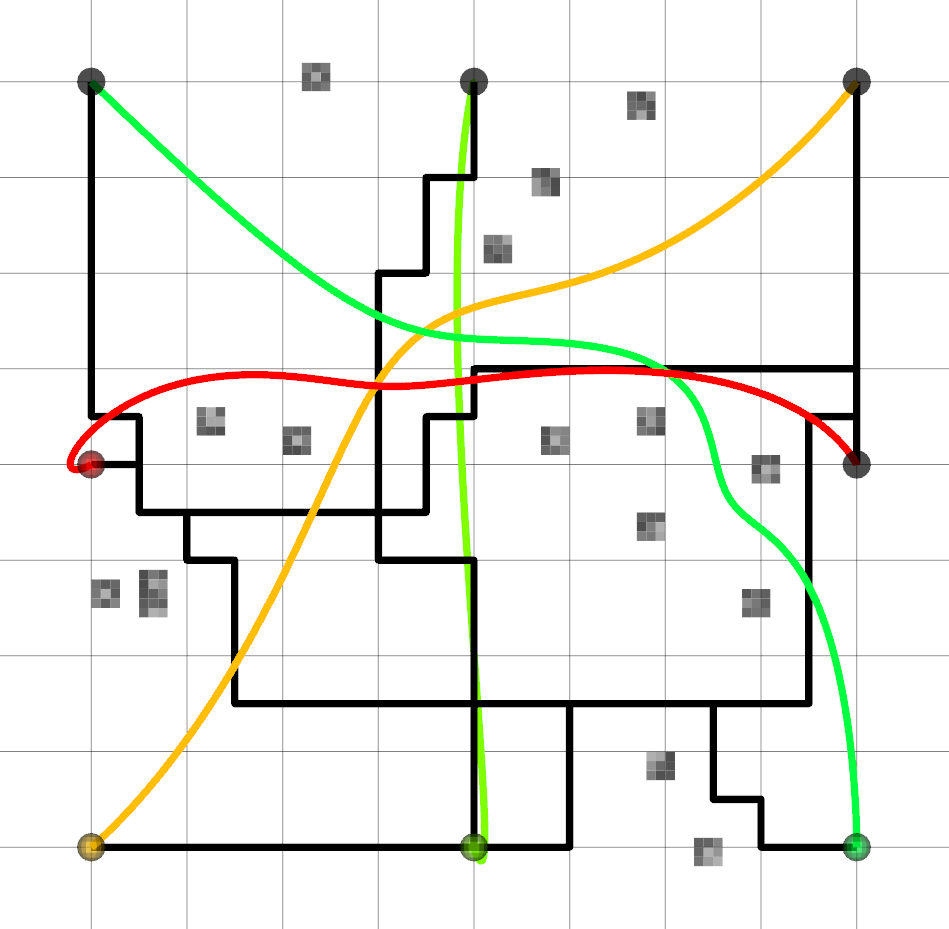}
        \caption{Plan for one batch. 
        %Plan trajectories for one batch.
        }
        \label{fig: sequential planning 2}
    \end{subfigure}
    ~ %add desired spacing between images, e. g. ~, \quad, \qquad, \hfill etc. 
      %(or a blank line to force the subfigure onto a new line)
    \begin{subfigure}[t]{0.19\textwidth}
        \includegraphics[width=\textwidth]{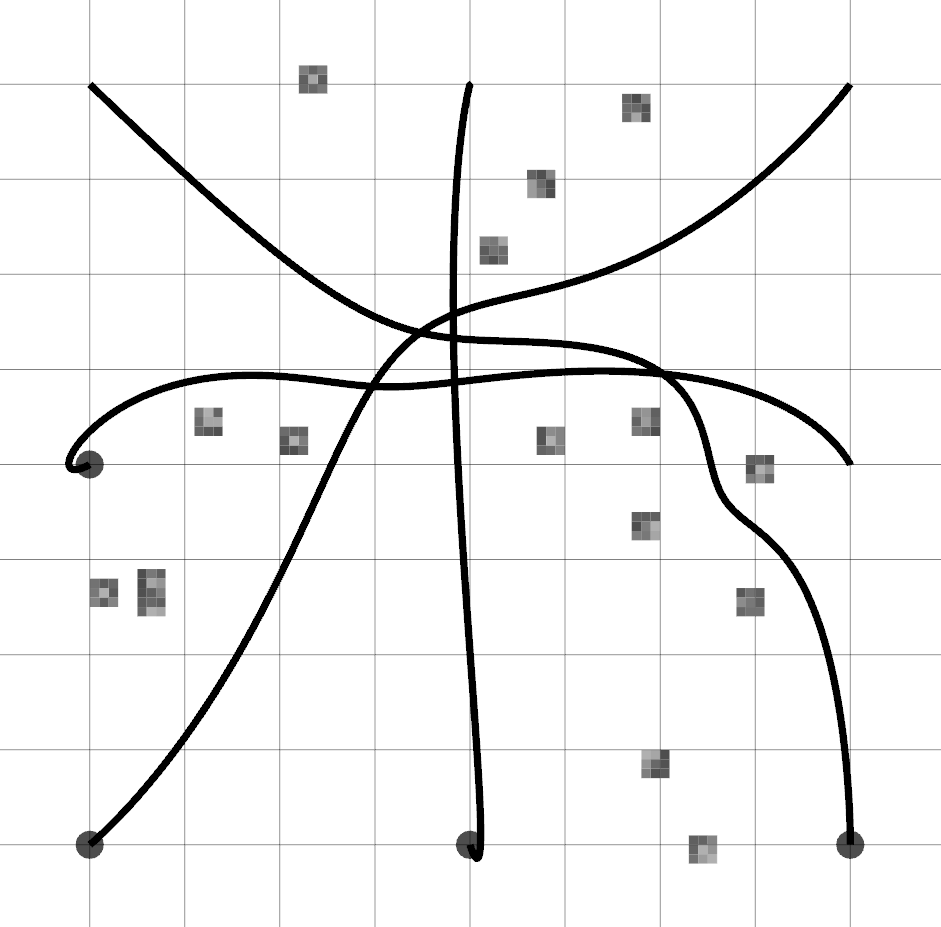}
        \caption{Replace dummy agents with the previous batch.}
        \label{fig: sequential planning 3}
    \end{subfigure}
    ~ %add desired spacing between images, e. g. ~, \quad, \qquad, \hfill etc. 
      %(or a blank line to force the subfigure onto a new line)
    \begin{subfigure}[t]{0.19\textwidth}
        \includegraphics[width=\textwidth]{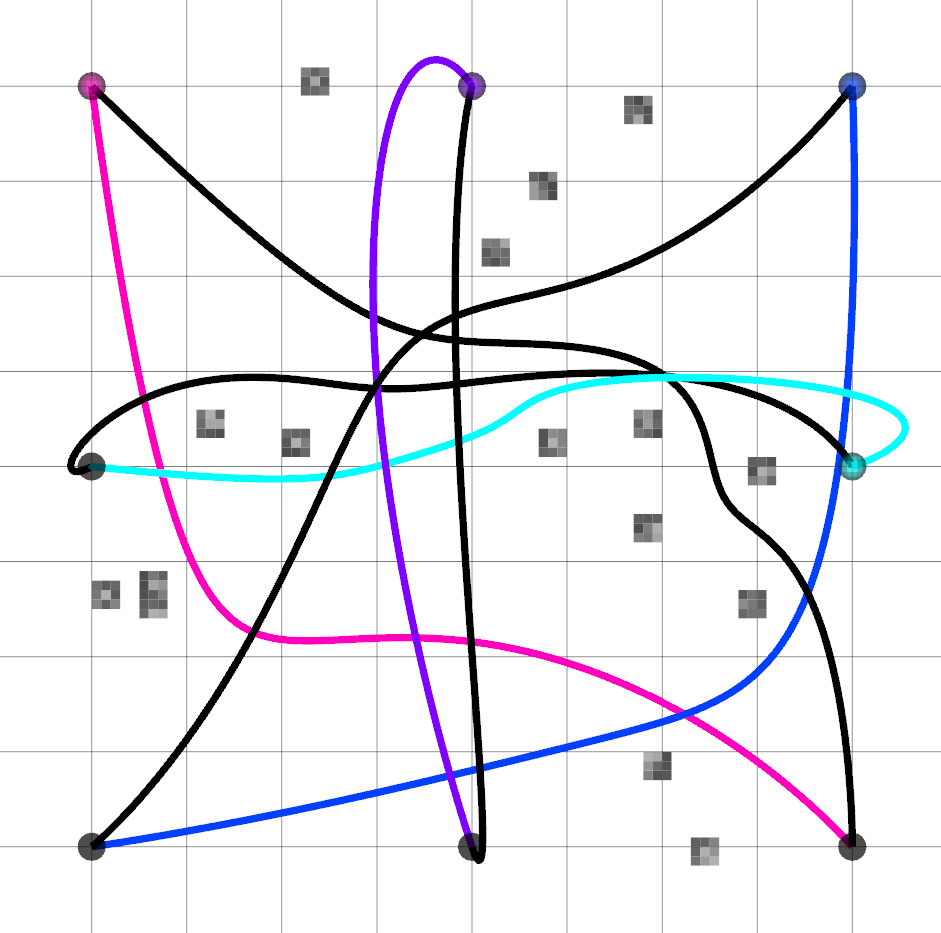}
        \caption{Plan for next batch.}
        \label{fig: sequential planning 4}
    \end{subfigure}
    \caption{Sequential planning with dummy agents when $N_{b}=2$. Dummy agent is depicted as a black circle, and agent in the current batch is depicted as a colored circle. For each iteration, we plan a trajectory for the current batch (color line) that avoids the trajectory of dummy agents (black line).
    %, and the previously planned trajectory is used as a dummy agent for the next iteration.
    }
    \label{fig: sequential planning}
\end{figure}

In (\ref{eq: trajectory optimization}), we do not consider dynamic limits in the QP problem because they can be infeasible constraints for QP. Instead, similar to \cite{honig2018trajectory}, we scale the total flight time for all agents uniformly after optimization (line 9 of Alg. \ref{alg: trajectory planning algorithm}).

\section{EXPERIMENTS}
\label{sec: experiments}
\subsection{Implementation Details}
The proposed algorithm is run in C++ and executed the proposed algorithm on a PC running Ubuntu 18.04. with Intel Core i7-7700 @ 3.60GHz CPU and 16G RAM.
We model the quadrotor with radius $r^{\forall i} = 0.15$m, maximum velocity $v^{\forall i}_{max} = 1.7m/s$, maximum acceleration $a^{\forall i}_{max} = 6.2m/s^2$ and downwash coefficient $c_{dw} = 2$ based on the specification of Crazyflie 2.0 in \cite{debord2018trajectory}.
We use the Octomap library \cite{hornung2013octomap} to represent the 3D occupancy map and use CPLEX QP solver \cite{cplex201612} for trajectory optimization.
% the dynamicEDT3D library \cite{lau2013efficient} to convert occupancy map to 3D grid map.
The degree of polynomials is determined to $n = 5$ to satisfy the assumption in the Theorem \ref{theorem: feasible constraints}. %\textcolor{green}{($n \geq 2\phi-1$).}
We plan the initial trajectory in 3D grid map with grid size $d=0.5$ m, and set suboptimal bound of ECBS to $1.3$.

\subsection{Comparison with Previous Work}
To validate the performance of our proposed algorithm, we compared the result with previous work \cite{1909.02896}. 
% \cite{1909.02896} also utilizes RSFC for inter-collision avoidance, but it uses the RSFC candidate method, which constructs RSFC by choosing the proper one of pre-defined RSFC candidates.
We conducted the simulations in 50 random forests. Each forest has a size of 10 m $\times$ 10 m $\times$ 2.5 m and contains randomly deployed 20 trees of size 0.3 m $\times$ 0.3 m $\times$ 1--2.5 m.
We assigned start point of quadrotors in a boundary of the xy-plane in 1 m height, and the goal points at the opposite to their start position as shown in Fig. \ref{fig:simulation}.
% Fig. \ref{fig:simulation} shows the simulation result with 16 agents.

\begin{figure}[t]
\centering
\includegraphics[width = 0.5\textwidth]{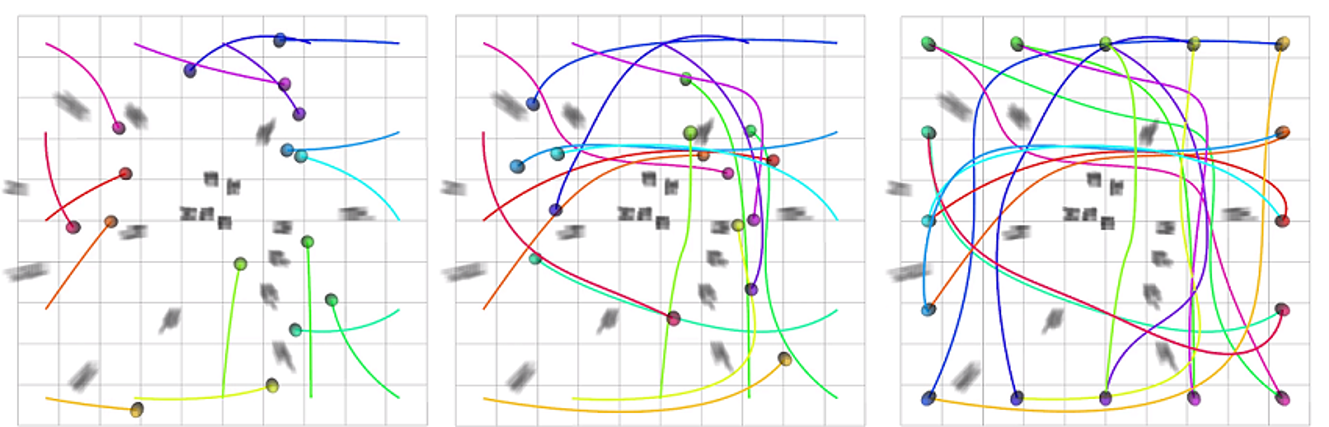}
\caption{Trajectory generated for 16 agents in a 10 m $\times$ 10 m $\times$ 2.5 m random forest. Agents are marked with colored circles at the goal (assigned to the opposite of the start points), along with their trajectories.
}
\label{fig:simulation}
\end{figure}

\subsubsection{Success Rate}
We executed the simulation with 16 agents, and measured the success rate by the size of agents.
As shown in the left graph of Fig. \ref{fig: prev vs rbp}, both methods show a $100\%$ success rate in 50 random forest when the radius of agents is small, but the success rate of \cite{1909.02896} decreases as the size of agents increases.
It is because the larger the agent size, the smaller the space for agents can exist, which lead to the higher probability that the constraints for SFC and RSFC are infeasible each other.
On the contrary, the proposed method shows a perfect success rate for all case because we design SFC and RSFC to feasible each other.

\subsubsection{Solution Quality}
As depicted in the right graph of Fig. \ref{fig: prev vs rbp}, the proposed algorithm shows better performance with respect to both objective cost and computation time compare to previous work when the number of the batch $N_b$ is more than one.
It can generate a trajectory for 64 agents in 6.36 s ($N_b=16$), and it has $78\%$ ($N_b=1$), $53\%$ ($N_b=16$) less objective cost.
Note that we can adjust $N_b$ depending on the desired objective cost and computation time.

\begin{table}[t!]
\caption{Computation time comparison with previous work \cite{1909.02896}. The numbers in parentheses represent the computation time increment when the number of agents is doubled.} \vspace{-3mm}
\label{table: scalablilty}
\begin{center}
\begin{tabularx}{\linewidth}{c|*{5}{X}}
\toprule
% \hline 
 & \multicolumn{5}{c}{Computation Time (s)} \tabularnewline
% \hline 
Agents & \centering 4 & \centering 8 & \centering 16 & \centering 32 & \centering 64 \tabularnewline
\hline 
\cite{1909.02896} & \centering 0.093 & \centering 0.19 ($\times 2.0$) &  \centering 0.81 ($\times 4.3$) & \centering 5.30 ($\times 6.5$)& \centering 51.1 ($\times 9.6$) \tabularnewline
% \hline
Proposed ($N_b=1$) & \centering 0.11 & \centering 0.29 ($\times 2.7$)& \centering 1.15 ($\times 3.9$) & \centering 11.1 ($\times 9.6$) & \centering 197.0 ($\times 17.8$) \tabularnewline
% \hline  
Proposed ($N/N_b=4$) & \centering 0.11 & \centering 0.23 ($\times 2.2$) & \centering 0.59 ($\times 2.5$) & \centering 1.55 ($\times 2.6$) & \centering 6.36 ($\times 4.1$) \tabularnewline
% \hline
\hline
\end{tabularx}
\end{center}
\vspace{-5mm}
\end{table}

\begin{figure}
    \centering
    \begin{subfigure}[t]{0.24\textwidth}
        \centering
        \includegraphics[width=\textwidth]{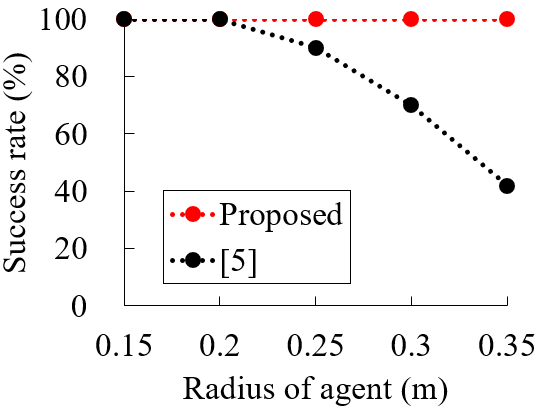}
        % \caption{Success rate of two trajectory generation methods for 16 agents.}
        % \caption{}
        % \label{fig: successrate}
    \end{subfigure}%
    \hfill %add desired spacing between images, e. g. ~, \quad, \qquad, \hfill etc. 
      %(or a blank line to force the subfigure onto a new line)
    \begin{subfigure}[t]{0.24\textwidth}
        \centering
        \includegraphics[width=\textwidth]{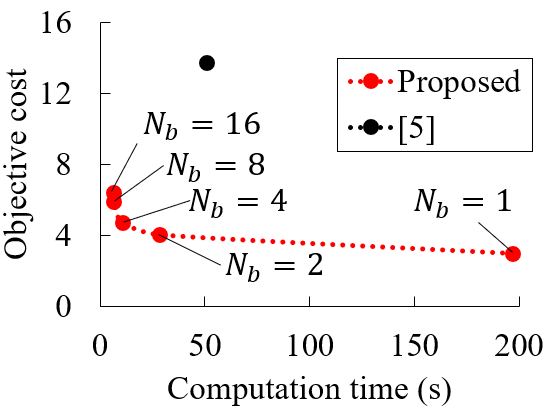}
        % \caption{Cost vs computation time for 64 agents.}
        % \caption{}
        % \label{fig: costvscomp}
    \end{subfigure}
    
    % \centering
    % \includegraphics[width = 1\textwidth]{figure/comparision_prev.jpg}
    
    \caption{
    Comparison with previous work \cite{1909.02896}. (Left) The success rate for 16 agents, (Right) Objective cost and computation time for 64 agents by the number of batches $N_b$.
    }
    \label{fig: prev vs rbp}
\end{figure}

\subsubsection{Scalability Analysis}
The computation time increment by the number of agents is shown in Table \ref{table: scalablilty}. When the number of agents is small, the computation time increases linearly, regardless of the trajectory optimization method, but it follows the time complexity of QP solver as the number of agents increases if we do not adopt the sequential optimization method.
On the other hand, if we maintain the size of the batch ($N/N_b$), it still shows good scalability with the high number of agents. %이유 써줘야할듯?

% \begin{table*}[t!]
% \caption{Performance comparison with previous work \cite{1909.02896}}
% \label{table:scale}
% \begin{center}
% \begin{tabularx}{\linewidth}{c||*{5}{X}|X}
% \toprule
% % \hline 
%  & \multicolumn{5}{c|}{Computation Time (s)} & \centering Objective Cost \tabularnewline
% % \hline 
% Agents & \centering 4 & \centering 8 & \centering 16 & \centering 32 & \centering 64 & \centering 64 \tabularnewline
% \hline 
% \cite{1909.02896} & \centering 0.093 & \centering 0.19 &  \centering 0.81 & \centering 5.30 & \centering 51.1 & \centering 13.7 \tabularnewline
% % \hline
% prop ($N_b=1$) & \centering 0.11 & \centering 0.29 & \centering 1.15 & \centering 11.1 & \centering 197.0 & \centering 2.98 \tabularnewline
% % \hline  
% prop ($N_b=2$) & \centering 0.091 & \centering 0.23 & \centering 0.68 & \centering 2.97 & \centering 28.7 & \centering 4.02 \tabularnewline
% % \hline
% prop ($N_b=4$) & \centering 0.089 & \centering 0.19 & \centering 0.59 & \centering 1.71 & \centering 10.8 & \centering 4.68 \tabularnewline
% % \hline
% prop ($N_b=8$) & \centering - & \centering 0.19 & \centering 0.45 & \centering 1.55 & \centering 6.38 & \centering 5.86 \tabularnewline
% % \hline
% prop ($N_b=16$) & \centering - & \centering - & \centering 0.45 & \centering 1.17 & \centering 6.36 & \centering  6.39 \tabularnewline
% \hline
% \end{tabularx}
% \end{center}
% \end{table*}

\subsection{Comparison with SCP-based Method}
We compared the proposed algorithm with SCP-based method \cite{augugliaro2012generation}.
Experiments are done in 10 m $\times$ 10 m $\times$ 2.5 m empty space with 8 agents. Start position and goal points are same as the previous experiment as shown in Fig. \ref{fig: scp vs rbp}, and we assigned the same total flight time to both algorithm.
% When we run the SCP method, we did not consider the case when the time step $h$ is over 1 because it ignores collision avoidance constraint at all.

Table \ref{table: scp vs rbp} shows that the proposed algorithm requires less computation time for all the cases, and this result does not change when we stop the SCP at the first iteration with collision avoidance constraints. 
The third column shows the safety margin ratio of each method. Safety margin ratio is calculated by $\text{min}_{i,j} d^{i,j}_{min}/(r^i+r^j)$, where $d^{i,j}_{min}$ is a minimum distance between two agents $i,j$. Safety margin ratio must be over $100\%$ to guarantee the collision avoidance, however, SCP-based method does not satisfy this because SCP checks only collision avoidance between discrete points on each trajectory. On the contrary, the proposed method satisfies the safety condition completely.

Although the proposed method perform better in computation time and safety margin, it has longer total flight distance compared to the SCP method. It is because our initial trajectory is not optimal respect to total flight distance in non-grid space. Thus, we need to plan initial trajectory considering total flight distance, and leave it as future work.
% To solve this problem, we need to plan initial trajectory considering total flight distance, and 

\begin{table}[t!]
\caption{Comparison of proposed algorithm and SCP-based method.}
\label{table: scp vs rbp}
\begin{center}
\begin{tabularx}{\linewidth}{c|*{4}{X}}
\toprule
& \centering Comp. Time per Iter. (s) & \centering Total Comp. Time (s) &  \centering Safety Margin Ratio & \centering Total Flight Dist. (m) \tabularnewline
\hline 
SCP ($h=1.0$ s) &  \centering 0.78 & \centering 2.80 &  \centering $12\%$ & \centering \textbf{77.29} \tabularnewline
SCP ($h=0.5$ s) &  \centering 5.5 & \centering 20.5 &  \centering $81\%$ & \centering 77.36 \tabularnewline
% \hline
SCP ($h=0.34$ s) & \centering 16.2 & \centering 60.4 & \centering $92\%$& \centering 77.38 \tabularnewline
% \hline  
SCP ($h=0.25$ s) & \centering 42.1 & \centering 156.6 & \centering $96\%$& \centering 77.40 \tabularnewline
% \hline
Proposed ($N_b=1$) & \centering - & \centering \textbf{0.65} & \centering \textbf{101}$\%$& \centering 90.74  \tabularnewline
\hline
\end{tabularx}
\end{center}
\vspace{-5mm}
\end{table}

\begin{figure}
    % \vspace{1.5mm}
    \centering
    \begin{subfigure}[t]{0.17\textwidth}
        \centering
        \includegraphics[width=\textwidth]{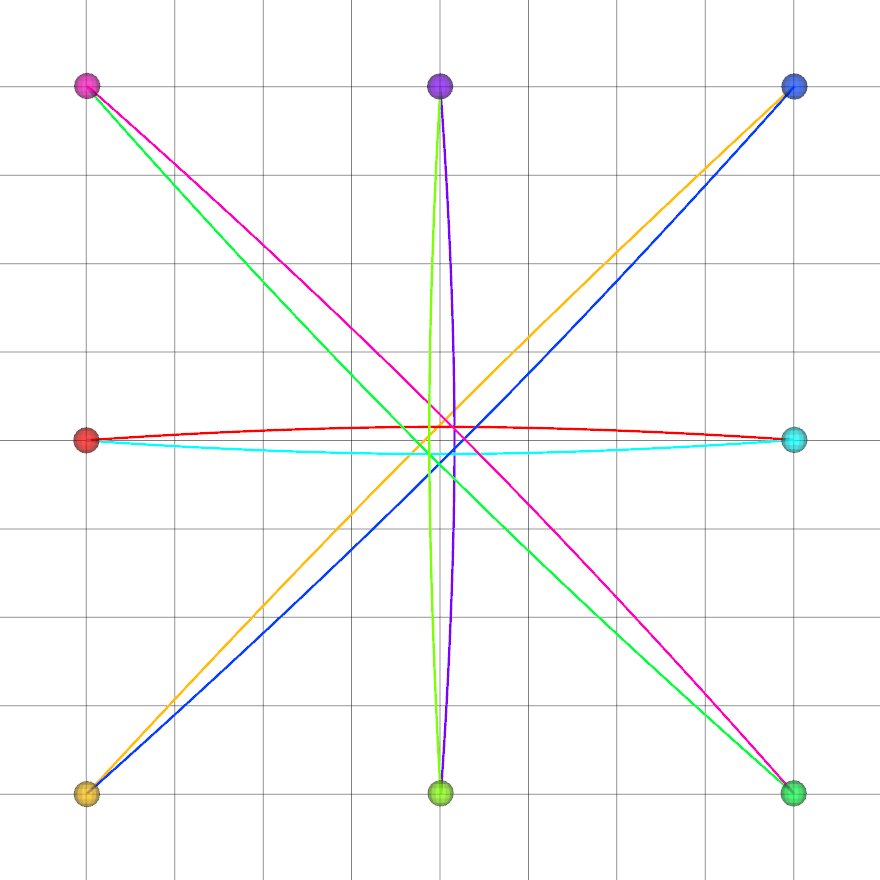}
        \caption{SCP-based \cite{augugliaro2012generation}}
        \label{fig: scp}
    \end{subfigure}%
    ~ %add desired spacing between images, e. g. ~, \quad, \qquad, \hfill etc. 
      %(or a blank line to force the subfigure onto a new line)
    \begin{subfigure}[t]{0.17\textwidth}
        \centering
        \includegraphics[width=\textwidth]{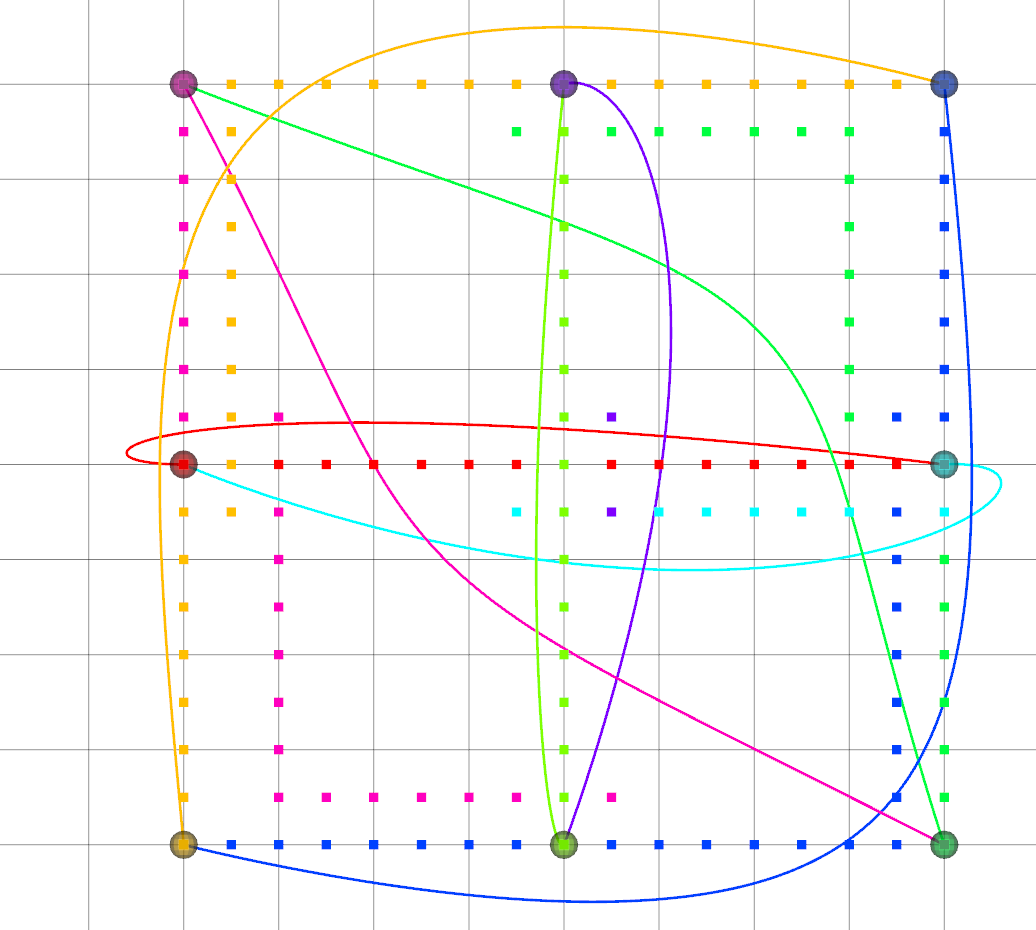}
        \caption{Proposed}
        \label{fig: rbp}
    \end{subfigure}
    \caption{
    Trajectory planning result of the propose algorithm and SCP-based method in empty space. The dots in (b) are the initial trajectory of corresponding agents.
    }
    \label{fig: scp vs rbp}
\end{figure}

% \begin{table*}[t!]
% \caption{Comparison of proposed algorithm and SCP-based method.}
% \label{table: scp vs rbp}
% \begin{center}
% \begin{tabularx}{\linewidth}{c|*{4}{X}}
% \toprule
% & \centering Comp. Time per Iteration (s) & \centering Total Comp. Time (s) &  \centering Min. Dist. btw Agents (m) & \centering Total Flight Distance (m) \tabularnewline
% \hline 
% SCP ($h=1.0$) &  \centering 0.78 & \centering 2.80 &  \centering 0.036 ($12\%$) & \centering 77.29 \tabularnewline
% SCP ($h=0.5$) &  \centering 5.5 & \centering 20.5 &  \centering 0.243 ($81\%$)& \centering 77.36 \tabularnewline
% % \hline
% SCP ($h=0.34$) & \centering 16.2 & \centering 60.4 & \centering 0.275 ($92\%$)& \centering 77.38 \tabularnewline
% % \hline  
% SCP ($h=0.25$) & \centering 42.1 & \centering 156.6 & \centering 0.289 ($96\%$)& \centering 77.40 \tabularnewline
% % \hline
% prop ($N_b=1$) & \centering - & \centering 0.65 & \centering 0.304 ($101\%$)& \centering 90.74  \tabularnewline
% \hline
% \end{tabularx}
% \end{center}
% \end{table*}

\subsection{Flight Test}
We conducted real flight test with 6 Crazyflie 2.0 quadrotors in a 5 m x 7 m x 2.5 m space. We used Crazyswarm \cite{preiss2017crazyswarm} to follow the pre-computed trajectory, and we used Vicon motion capture system to obtain the position information at 100 Hz.
The snapshot of the flight test is shown in Fig. \ref{fig: flight test}, and the full flight is presented in the supplemental video.

\section{CONCLUSIONS}
\label{sec: conclusions}
We presented an efficient trajectory planning algorithm for multiple quadrotors in obstacle environments, combining the advantages of grid-based and optimization-based planning algorithm.
Using relative Bernstein polynomial, we reformulated trajectory generation problem to convex optimization problem, which guarantees to generate continuous, collision-free, and dynamically feasible trajectory. We improved the scalability of the algorithm by using sequential optimization method, and we proved overall process does not cause the failure of optimization if there exist initial trajectory. 
The proposed algorithm shows considerable reduction in computation time and objective cost compared to our previous work, and it shows better performance in computation time and safety, compared to SCP-based method. 
  
In future work, we plan to develop initial trajectory planner that optimizes total flight distance in non-grid space, and we will extend our work to dynamic obstacle environment.

\newpage
\printbibliography

\addtolength{\textheight}{-12cm}   % This command serves to balance the column lengths

\end{document}